\newtheorem{theorem}{Theorem}
\newtheorem{lemma}{Lemma}
\newtheorem{observation}{Observation}
\theoremstyle{definition}
\newtheorem{definition}{Definition}
\newtheorem{problem}{Problem}
\newcommand{\etal}{\textit{et al}.}
\begin{document}
\title{Improving the dilation of a metric graph by adding edges}

\author{Joachim Gudmundsson and Sampson Wong \\
\small{University of Sydney, Australia \\
joachim.gudmundsson@sydney.edu.au, swon7907@sydney.edu.au}}

\date{}
\maketitle

\begin{abstract}
Most of the literature on spanners focuses on building the graph from scratch. This paper instead focuses on adding edges to improve an existing graph. A major open problem in this field is: given a graph embedded in a metric space, and a budget of~$k$ edges, which~$k$ edges do we add to produce a minimum-dilation graph? The special case where $k=1$ has been studied in the past, but no major breakthroughs have been made for $k > 1$. We provide the first positive result, an $O(k)$-approximation algorithm that runs in $O(n^3 \log n)$ time. 
\end{abstract}

\section{Introduction}

Let $G = (V,E)$ be a graph embedded in a metric space $(M,d_M)$. For every edge $(u, v) \in E$, the weight of the edge $(u,v)$ is equal to the distance~$d_M(u,v)$ between points~$u$ and~$v$ in the metric space~$M$. Let~$d_G(u,v)$ be the weight of the shortest path between~$u$ and~$v$ in the graph $G$. For any real number~$t > 1$, we call~$G$ a \emph{$t$-spanner} if $d_G(u,v) \leq t \cdot d_M(u,v)$ for every pair of points $u, v \in V$. The \emph{stretch}, or \emph{dilation}, of $G$ is the smallest $t$ for which $G$ is a $t$-spanner.

Spanners have been studied extensively in the literature, especially in the geometric setting. Given a fixed $t >1$, a fixed dimension $d \geq 1$, and a set of $n$ points~$V$ in $d$-dimensional Euclidean space, there is a $t$-spanner on the point set~$V$ with~$O(n)$ edges. For a summary of the considerable research on geometric spanners, see the surveys~\cite{DBLP:books/el/00/Eppstein00,DBLP:reference/crc/GudmundssonK07,DBLP:books/el/00/Smid00} and the book by Narasimhan and Smid~\cite{DBLP:books/daglib/0017763}. Spanners in doubling metrics~\cite{DBLP:conf/soda/ChanGMZ05,DBLP:conf/focs/Gottlieb15,DBLP:journals/siamcomp/Har-PeledM06} and in general graphs~\cite{DBLP:journals/rsa/BaswanaS07,peleg2000distributed,DBLP:journals/jacm/ThorupZ05} have also received considerable attention.

Most of the literature on spanners focuses on building the graph from scratch. This paper instead focuses on adding edges to improve an existing graph. Applications where graph networks tend to be better connected over time include road, rail, electric and communication networks. The overall quality of these networks depends on both the quality of the initial design and the quality of the additions. In this paper, we focus on the latter. In particular, given an initial metric graph, and a budget of~$k$ edges, which~$k$ edges do we add to produce a minimum-dilation graph?

\begin{figure}[ht]
\centering
    \includegraphics{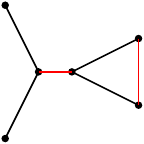}
    \caption{An example where $k=2$ edges (red) are added to an initial graph $G$ (black) to produce a minimum-dilation graph.}
    \label{fig:problem1}
\end{figure}

\begin{problem}
\label{problem:1}
Given a positive integer~$k$ and a metric graph~$G = (V,E)$, compute a set $S \subseteq V \times V$ of~$k$ edges so that the dilation of the resulting graph~$G' = (V, E \cup S)$ is minimised.
\end{problem}

The problem stated is a major open problem in the field~\cite{DBLP:journals/siamcomp/FarshiGG08,DBLP:conf/isaac/LuoW08,DBLP:journals/comgeo/Wulff-Nilsen10}. It is also one of twelve open problems posed in the final chapter of Narasimhan and Smid's book~\cite{DBLP:books/daglib/0017763}. As no major breakthroughs have been made, special cases have been studied.

The first special case is when $k=1$. Let $n$ and $m$ be the number of vertices and edges of the graph~$G$, respectively. Farshi~\etal~\cite{DBLP:journals/siamcomp/FarshiGG08} provided an $O(n^4)$ time exact algorithm and an $O(mn + n^2 \log n)$ time 3-approximation. Wulff-Nilsen~\cite{DBLP:journals/comgeo/Wulff-Nilsen10} improved the running time of the exact algorithm to $O(n^3 \log n)$, and in a follow-up paper Luo and Wulff-Nilsen~\cite{DBLP:conf/isaac/LuoW08} provided an $O((n^4 \log n)/ \sqrt m)$ time exact algorithm that uses linear space. Several of the papers that study the $k=1$ case mention the $k > 1$ case as one of the main open problems in the field. 

The second special case is if $G$ is an empty graph. Giannopoulos~\etal~\cite{DBLP:journals/ijcga/GiannopoulosKKKM10} and Gudmundsson and Smid~\cite{DBLP:journals/ijfcs/GudmundssonS09} independently proved that it is \textsc{NP}-hard to produce the highest quality spanner by adding $k$ edges to an empty graph. This implies that Problem~\ref{problem:1} is \textsc{NP}-hard. If we restrict ourselves to polynomial time algorithms, it therefore makes sense to consider approximation algorithms. 
In Euclidean space, Aronov~\etal~\cite{DBLP:journals/comgeo/AronovBCGHSV08} showed how to add $k=n-1+\ell$ edges to an empty graph to produce an $O(n/(\ell + 1))$-spanner in $O(n \log n)$ time. By setting $\ell = \varepsilon n$, this result implies an $O(1/\varepsilon)$-approximation to Problem~\ref{problem:1} for all $k \geq (1+\varepsilon) n$. However, the general case where $G$ is a non-empty (Euclidean or metric) graph and $k \leq n-1$ still remains open.

Farshi~\etal~\cite{DBLP:journals/siamcomp/FarshiGG08} conjectured that generalising their algorithm to general $k$ may provide a reasonable approximation algorithm. In Section~\ref{sec:bottleneck}, we show an~$\Omega(2^{k})$ lower bound for their algorithm. 

In this paper we obtain the first positive result for the general case. Our approximation algorithm runs in $O(n^3 \log n)$ time and guarantees an $O(k)$-approximation factor. Although our algorithm may not be optimal, we hope that we provide some insight for further research, or for related graph augmentation problems~\cite{DBLP:journals/ijcga/AhnFKSW10,DBLP:conf/giscience/HaunertM16,hurtado2013plane,DBLP:journals/comgeo/KleinKNS09}.  

We provide a tight analysis of our algorithm. We show that, for any $\varepsilon > 0$, our algorithm yields an approximation factor of $(1+\varepsilon)(k+1)$, but the same algorithm cannot yield an approximation factor better than $(1-\varepsilon)(k+1)$. We achieve our main result by reducing Problem~\ref{problem:1} to the following approximate decision version:

\begin{problem}
\label{problem:2}
Given an integer $k$, a real number $t$, and a metric graph $G = (V,E)$, decide whether $t^* \leq t$ or $t^* > \frac t {k+1}$, where $t^*$ is the minimum dilation of $G' = (V, E \cup S)$ over all sets $S$ where $S \subseteq V \times V$ and $|S| = k$. In the case where $\frac t {k+1} < t^* \leq t$, either of the two options may be chosen arbitrarily. 
\end{problem}

Our algorithm for Problem~\ref{problem:2} is a slight modification of the standard greedy $t$-spanner algorithm. 
We provide details of our algorithm and argue its correctness in Section~\ref{sec:greedy_construction}. In Section~\ref{sec:minimising_the_dilation}, we show how to use the approximate decision algorithm for Problem~\ref{problem:2} to develop an approximation algorithm for  Problem~\ref{problem:1}. We prove that only $O(\log n)$ calls to the greedy algorithm is required to obtain an $(1+\varepsilon)(k+1)$-approximation. Finally, in Section~\ref{sec:greedy_lower_bound}, we provide a construction to show that the same algorithm cannot yield an approximation factor better than $(1-\varepsilon)(k+1)$.

\section{The Greedy Construction}
\label{sec:greedy_construction}

As mentioned in the introduction, our approach to solving Problem~\ref{problem:2} is a modified greedy $t$-spanner construction. We introduce some notation for the purposes of stating the algorithm. For an edge $e \in V \times V$, let~$d_M(e)$ denote the length of the edge $e$ in the metric space $M$. Given a graph $G$, let~$\delta_G(e)$ denote the shortest path between the endpoints of $e$ in the graph $G$. Let $d_G(e)$ be the total length of edges along the path $\delta_G(e)$.

In the original greedy spanner construction, the algorithm begins with an empty graph $G$, and a positive real value $t>1$, and yields a $t$-spanner as follows: sort all the edges in $V \times V$ by increasing weight and then process them in order. Processing an edge $e$ entails a shortest path query. If $d_G(e) > t \cdot d_M(e)$, then the edge $e$ is added to $G$, otherwise it is discarded. The algorithm terminates when all edges have been processed. The resulting graph is a~$t$-spanner.

In our setting we will start with an initial graph $G$, a positive real value $t>1$ and a positive integer $k$. Our modified greedy algorithm sorts the edges in $V \times V \setminus E$ by increasing weight and then processes them in order. For each edge $e$, we perform a shortest path query. If $d_G(e) > t \cdot d_M(e)$, then the edge $e$ is added to $G$, otherwise it is discarded. The algorithm terminates if all edges have been processed, or if $k+1$ edges have been added to $G$ by the algorithm.

Formally, the greedy edges $a_i$ and the augmented graphs $G_i$ are defined inductively as follows:

\begin{definition}
\label{defn:greedy_algorithm}
Let $G_0 = G$, and for $1\leq i \leq k+1$, let $G_i = G_{i-1} \cup \{a_{i}\}$ where $a_i$ is the shortest edge in $V \times V$ satisfying $d_{G_{i-1}}(a_i) > t \cdot d_M(a_i)$.
\end{definition}

If the algorithm terminates after all the edges have been processed, then at most~$k$ edges have been added to yield a $t$-spanner. Therefore~$t^* \leq t$. Otherwise, if at least~$k+1$ edges are added, we will prove in Section~\ref{sec:greedy_upper_bound} that~$t^* > \frac t {k+1}$.  

\subsection{Proof of correctness}
\label{sec:greedy_upper_bound}

Our approach is to use the edges added by the greedy algorithm to obtain an upper bound on~$t$ with respect to $t^*$. Our upper bound comes from the following relationship, which is a straightforward consequence of Definition~\ref{defn:greedy_algorithm}:

\begin{observation}
\label{obs:L}
In the graph $G_{i-1}$, if there is a path between the endpoints of $a_i$ with total length $L$, then $L > t \cdot d_M(a_i)$.
\end{observation}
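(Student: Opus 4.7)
My plan is simply to unwind the definitions. The first step is to recall from the notation established just before Definition~\ref{defn:greedy_algorithm} that $d_{G_{i-1}}(a_i)$ denotes the length of the shortest path $\delta_{G_{i-1}}(a_i)$ in $G_{i-1}$ between the endpoints of $a_i$. By the very meaning of "shortest path", for any other path between those same endpoints in $G_{i-1}$ having total length $L$, we have $L \geq d_{G_{i-1}}(a_i)$.

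The second step is to invoke Definition~\ref{defn:greedy_algorithm} directly. By the inductive selection rule, $a_i$ is only added to $G_{i-1}$ because it satisfies $d_{G_{i-1}}(a_i) > t \cdot d_M(a_i)$. Chaining this strict inequality with the previous weak inequality gives $L \geq d_{G_{i-1}}(a_i) > t \cdot d_M(a_i)$, which is exactly the claim.

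I do not anticipate any real obstacle here; the proof is a two-line chain of inequalities. The content of the observation is essentially a restatement of the greedy selection criterion, lifted from the specific shortest path $\delta_{G_{i-1}}(a_i)$ to an arbitrary path via the trivial minimality property of $d_{G_{i-1}}$. I expect the authors isolate the statement as an observation only because it will be applied repeatedly in Section~\ref{sec:greedy_upper_bound} to various specifically-constructed paths (not just the shortest ones) to derive the upper bound on $t$ in terms of $t^{*}$.
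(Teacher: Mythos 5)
Your proof is correct and matches the paper's reasoning: the paper offers no separate argument, calling the observation a straightforward consequence of Definition~\ref{defn:greedy_algorithm}, and your two-line chain $L \geq d_{G_{i-1}}(a_i) > t \cdot d_M(a_i)$ is exactly that consequence made explicit.
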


Our goal is to construct a path in $G_{i-1}$ between the endpoints of~$a_i$ and to bound its length by $(k+1)\, t^* \cdot d_M(a_i)$. If we are able to do this, then Observation~\ref{obs:L} would immediately imply that $(k+1) \, t^* > t$, as required. Note that~$i$ is some fixed integer between 1 and $k+1$. As part of our construction, we will show how to select a suitable value for~$i$.

To motivate how we construct a path in~$G_{i-1}$ between the endpoints of~$a_i$, let us consider a special case where $k=1$. Let $G$ be the initial graph and let the first two greedy edges be~$a_1$ and~$a_2$. Suppose that an optimal edge to add is $s_1$, and let $G^* = G \cup \{s_1\}$. See Figure~\ref{fig:k=1}. 

\begin{figure}[ht]
    \centering
    \includegraphics{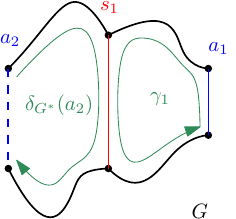}
    \caption{The graph $G$ with optimal edge $s_1$ and greedy edges $a_1$ and $a_2$.}
    \label{fig:k=1}
\end{figure}

We select $i=2$ in Observation~\ref{obs:L}, so that our goal is to construct a path in $G_1 = G \cup \{a_1\}$ between the endpoints of $a_2$ and upper bound its length by~$2\, t^* \cdot d_M(a_2)$.

A na\"ive path between the endpoints of~$a_2$ that has length upper bounded by $t^* \cdot d_M(a_2)$ is the path $\delta_{G^*}(a_2)$, which we recall is the shortest path between the endpoints of $a_2$ in the graph~$G^*$. The path is shown in Figure~\ref{fig:k=1}. The reason that $d_{G^*}(a_2) \leq t^* \cdot d_M(a_2)$ is because the dilation of~$G^*$ is~$t^*$. Unfortunately, the issue with this path is that it uses the edge~$s_1$ and therefore is not a path in~$G_{1}$, so Observation~\ref{obs:L} does not apply.

We modify $\delta_{G^*}(a_2)$ into a longer path that does not use $s_1$. Our approach is to combine the path with a cycle by using the \emph{symmetric difference} operation. Recall that the symmetric difference of a set of sets are all the elements that appear in an odd number of those sets.

To remove $s_1$ from the path~$\delta_{G^*}(a_2)$, we take its symmetric difference with the cycle~$\gamma_1$, which is formed by linking the path $\delta_{G^*}(a_1)$ and the edge $a_1$ end to end. Ideally, the symmetric difference of $\delta_{G^*}(a_2)$ and $\gamma_1$ would form a path between the endpoints of~$a_2$. Moreover, if both the path~$\delta_{G^*}(a_2)$ and the cycle~$\gamma_1$ use the edge~$s_1$ exactly once, then taking the symmetric difference cancels the two occurrences of~$s_1$, leaving a path that is entirely in~$G_1$. 

In fact, we can show this approach works in general. We begin with the na\"ive path $\delta_{G^*}(a_i)$, where~$G^*$ is the optimal graph defined as follows:

\begin{definition}
\label{defn:S}
Let $S \subseteq V \times V$ be the set of $k$ edges so that $G \cup S$ has dilation~$t^*$. Then $G^* = G \cup S$.
\end{definition}

Similar to the~$k=1$ case, the path $\delta_{G^*}(a_i)$ is not in the graph~$G_{i-1}$. We modify the path $\delta_{G^*}(a_i)$ by taking its symmetric difference with a set of cycles. We prove that for any set of cycles, the symmetric difference of $\delta_{G^*}(a_i)$ and the set of cycles always contains a path between the endpoints of $a_i$. Moreover, we show how to select the set of cycles in such a way that all edges in~$S$ are cancelled out by the symmetric difference. In this way, we have constructed a path in the graph $G_{i-1}$ between the endpoints of $a_i$.

We first prove that taking the symmetric difference of $\delta_{G^*}(a_i)$ with any set of cycles maintains the invariant that there always exists a path between the endpoints of $a_i$.

\begin{lemma}
\label{lem:symmetric_difference}
In any graph, the symmetric difference of a path $P$ with any number of cycles contains a path between the endpoints of $P$. See Figure~\ref{fig:symmetric_difference}.
\end{lemma}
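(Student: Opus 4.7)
The plan is to prove Lemma~\ref{lem:symmetric_difference} via a parity-of-degrees argument combined with the handshake lemma applied component-by-component. Let $P$ be a path with endpoints $u$ and $v$, let $C_1,\ldots,C_m$ be cycles, and let $H = P \,\triangle\, C_1 \,\triangle\, \cdots \,\triangle\, C_m$ be the resulting edge set (an edge belongs to $H$ iff it appears in an odd number of the sets $P, C_1, \ldots, C_m$). The objective is to exhibit a $u$--$v$ path inside $H$.

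First I would compute, for each vertex $w$, the parity of its degree in $H$. Since an edge incident to $w$ lies in $H$ iff it lies in an odd number of $P, C_1, \ldots, C_m$, the degree of $w$ in $H$ modulo~$2$ equals the sum (mod~$2$) of its degrees in $P$ and in each $C_j$. Every vertex has even degree in a cycle, so the cycles contribute $0$ to this sum, and the parity in $H$ matches the parity in $P$. In the path $P$, only $u$ and $v$ have odd degree (namely $1$), while every internal vertex has degree $2$. Therefore $u$ and $v$ are the only vertices of odd degree in $H$.

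Next I would invoke the handshake lemma on each connected component of $H$: in any graph the number of odd-degree vertices in every connected component is even. Applied to the component of $H$ containing $u$, this forces at least one further odd-degree vertex to live in the same component. Since $v$ is the unique other odd-degree vertex in all of $H$, it must be that $u$ and $v$ lie in the same connected component of $H$, which means $H$ contains a path between $u$ and $v$, as required.

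I do not anticipate a serious obstacle here; the statement is essentially a restatement of the classical fact that $P \oplus (\text{cycle space})$ always connects the endpoints of $P$. The only subtlety worth being careful about is the definition of symmetric difference when the same edge appears in several of the input sets, and treating $H$ as a set of edges (rather than a multiset) so that the parity bookkeeping above is rigorous. Once this convention is fixed, the degree-parity computation and the handshake lemma yield the conclusion in a few lines.
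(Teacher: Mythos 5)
Your proof is correct and follows essentially the same route as the paper: compute degree parities in the symmetric difference, observe that the cycles contribute evenly so the endpoints of $P$ are the only odd-degree vertices, and conclude connectivity. The only difference is the final step --- you invoke the handshake lemma component-by-component, whereas the paper appeals to Euler's theorem to extract an Eulerian trail; your variant has the minor virtue of making explicit why the two endpoints must lie in the same connected component, a point the paper passes over by simply speaking of ``the connected component that contains the endpoints of $P$.''
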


\begin{figure}[ht]
    \centering
    \includegraphics{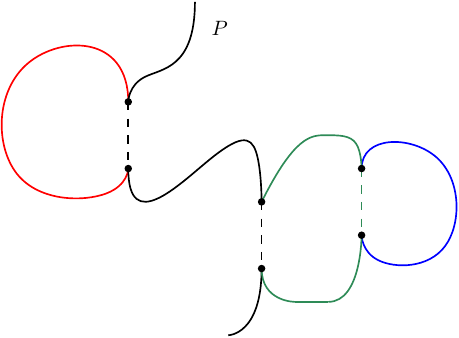}
    \caption{Given a path (black) and cycles (red, green, blue), the symmetric difference (solid) contains a path between the endpoints of the black path.}
    \label{fig:symmetric_difference}
\end{figure}

\begin{proof}
Consider a subgraph formed by the symmetric difference of $P$ and a set of cycles. We will look at the degree of all vertices in this subgraph.

Consider the parity of the degree of each vertex. Taking the symmetric difference maintains the parity of the sum of the degrees. The contribution of a cycle to the degree of all vertices is even, whereas the contribution of $P$ to the degree of all vertices is even except for the endpoints of $P$. Hence, the only two vertices with odd degree are the endpoints of $P$. Applying Euler's theorem to the connected component that contains the endpoints of $P$, we deduce that there is an Eulerian trail between the two vertices of odd degree. Hence, there is a path between the endpoints of $P$.
\end{proof}

Next, we construct the set of cycles $\Gamma = \{\gamma_j: 1 \leq j \leq k+1\}$. We will apply Lemma~\ref{lem:symmetric_difference} to our na\"ive path $\delta_{G^*}(a_i)$ and a subset of~$\Gamma$. Each cycle $\gamma_j$ is simply a generalisation of~$\gamma_1$ from the $k=1$ case, which we recall is formed by linking the path~$d_{G^*}(a_1)$ and the edge~$a_1$ end to end.

\begin{definition}
\label{defn:gamma_ij}
Let $\gamma_j$ be a cycle formed by linking the path $\delta_{G^*}(a_j)$ and the edge $a_j$ end to end.
\end{definition}

We choose an index $i$ and a subset of $\Gamma$ in such a way so that the symmetric difference of the path $\delta_{G^*}(a_i)$ and the cycles $\gamma_j \in \Gamma$ consists only of edges in $G_{i-1}$. In other words, all edges in $S$ that belong to the path $\delta_{G^*}(a_i)$ or the cycles $\gamma_j$ cancel out in the symmetric difference. We use elementary linear algebra to provide a non-constructive proof that there exists an index $i$ and a subset of $\Gamma$ where this property holds.

\begin{lemma}
\label{lem:linear_dependent}
Let $\{a_1, a_2, \ldots, a_{k+1}\}$ be the first $k+1$ edges given in Definition~\ref{defn:greedy_algorithm}. Then there exists a non-empty subset $I \subseteq \{1, 2, \ldots, k+1\}$ so that the symmetric difference of $\{\delta_{G^*}(a_j): j \in I\}$ does not contain any edges of $S$.
\end{lemma}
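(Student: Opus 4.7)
The plan is to prove Lemma~\ref{lem:linear_dependent} via a pigeonhole/dimension argument over the field $\mathbb{F}_2$. The key observation is that symmetric difference of edge sets is exactly addition of characteristic vectors over $\mathbb{F}_2$, and we have $k+1$ paths but $S$ contains only $k$ edges.

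More concretely, for each $j \in \{1, 2, \ldots, k+1\}$, I would associate to the path $\delta_{G^*}(a_j)$ a characteristic vector $v_j \in \mathbb{F}_2^{|S|}$ whose coordinates are indexed by the $k$ edges of $S$: the $s$-th coordinate of $v_j$ is $1$ if the edge $s \in S$ appears on $\delta_{G^*}(a_j)$, and $0$ otherwise. Since $\delta_{G^*}(a_j)$ is a simple shortest path, each edge appears at most once, so this characteristic vector is well-defined and the sum $v_{j_1} + \cdots + v_{j_r}$ in $\mathbb{F}_2^{|S|}$ is precisely the indicator of the edges of $S$ that appear an odd number of times in the multiset union of the paths $\delta_{G^*}(a_{j_1}), \ldots, \delta_{G^*}(a_{j_r})$, i.e.\ in their symmetric difference.

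Now I have $k+1$ vectors $v_1, \ldots, v_{k+1}$ living in the $k$-dimensional vector space $\mathbb{F}_2^{|S|} = \mathbb{F}_2^k$, so they must be linearly dependent. A non-trivial linear dependence over $\mathbb{F}_2$ is exactly a non-empty subset $I \subseteq \{1,\ldots,k+1\}$ with $\sum_{j \in I} v_j = 0$. Translating back, this says that no edge of $S$ appears in the symmetric difference of the paths $\{\delta_{G^*}(a_j) : j \in I\}$, which is the conclusion required.

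I do not anticipate a significant obstacle: the argument is essentially a one-line dimension count once the correspondence between symmetric difference of edge sets and addition in $\mathbb{F}_2^{|S|}$ is set up, and it is non-constructive exactly as the statement advertises. The only small subtlety is checking that each $\delta_{G^*}(a_j)$ traverses every edge of $S$ at most once so that the $\mathbb{F}_2$ characteristic vector really captures the parity of appearances, which is immediate because $\delta_{G^*}(a_j)$ is a simple shortest path in $G^*$.
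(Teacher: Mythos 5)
Your proposal is correct and is essentially the same argument as the paper's: both represent $\delta_{G^*}(a_j) \cap S$ as characteristic vectors in a $k$-dimensional vector space over $\mathbb{F}_2$ (the paper writes it as $\{0,1\}^S$), invoke linear dependence of the $k+1$ vectors, and read off the non-empty subset $I$ from the dependence relation. Your write-up is if anything slightly cleaner in making the mod-2 structure explicit from the start.
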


\begin{proof}
Recall from Definition~\ref{defn:S} that $S$ is the set of $k$ edges so that $G \cup S$ has dilation~$t^*$. Consider $\delta_{G^*}(a_j) \cap S$, which is a subset of $S$. We can represent any subset of $S$ as an element of the vector space $\{0,1\}^S$, as each binary digit simply represents whether an element is in that subset. Take the basis $\{1_{e}: e \in S\}$ for the vector space $\{0,1\}^S$. The basis element $1_{e}$ simply represents whether the~$e^{th}$ element of $S$ is in that subset. Hence, we can expand  $\delta_{G^*}(a_j) \cap S$ into a sum of basis elements by writing  $\delta_{G^*}(a_j) \cap S = \sum \lambda_{je} 1_e$.

As there are $k+1$ subsets $\delta_{G^*}(a_j) \cap S$, their vector space expansions $\sum \lambda_{je} 1_e$ must be linearly dependent. The linear dependence equation, when taken in modulo 2, can be rearranged into the form $\sum_{j \in I} \delta_{G^*}(a_j) \cap S = 0$ for some $I \subseteq \{1, 2, \ldots, k+1\}$ and $I \neq \emptyset$. The modulo 2 equation $\sum_{j \in I} \delta_{G^*}(a_j) \cap S = 0$ directly implies that the symmetric difference of $\{\delta_{G^*}(a_j) \cap S: j \in I\}$ is empty.
\end{proof}

For the remainder of this section, let $I \subseteq \{1, 2, \ldots, k+1\}$ be the subset that satisfies the conditions of Lemma~\ref{lem:linear_dependent}, in other words, the symmetric difference of $\{\delta_{G^*}(a_j): j \in I\}$ does not contain any edges of $S$. We select the path $\delta_{G^*}(a_i)$ where $i = \max I$. Let $J = I \setminus \{i\}$ and select the subset $\Gamma' = \{\gamma_j: j \in J\}$. We construct the set of edges that is the symmetric difference of $\delta_{G^*}(a_i)$ and $\Gamma'$. This completes the construction of the required path. 

\begin{figure}[ht]
    \centering
    \includegraphics{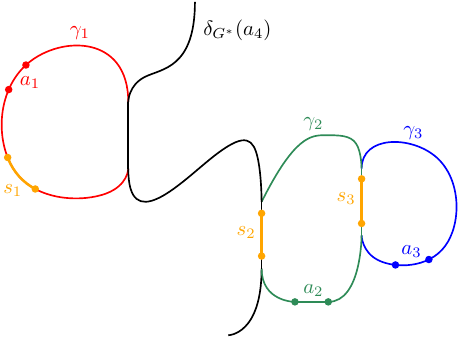}
    \caption{An example where we take the symmetric difference of $\delta_{G^*}(a_4)$, $\gamma_2$ and $\gamma_3$ to avoid all three of the edges $s_1$, $s_2$ and $s_3$ that are not in $G_3$. Note that $a_i \in \gamma_i$ for $i \in \{1,2,3\}$, and $a_4$ is the edge between the start and end points of the path $\delta_{G^*}(a_4)$.}
    \label{fig:gamma_ij}
\end{figure}

For an illustrated example, see Figure~\ref{fig:gamma_ij}. Let $k=3$ and $S = \{s_1, s_2, s_3\}$, so that $s_1 \in \delta_{G^*}(a_1)$, $s_2 \in \delta_{G^*}(a_2), \delta_{G^*}(a_4)$ and $s_3 \in \delta_{G^*}(a_2), \delta_{G^*}(a_3)$. By Lemma~\ref{lem:linear_dependent}, there must be a non-empty subset $I \subseteq \{1,2,3,4\}$ so that the symmetric difference of $\{\delta_{G^*}(a_j): j \in I\}$ does not contain any of the edges $s_1, s_2$ or~$s_3$. In particular, the subset $I = \{2,3,4\}$ includes $s_1$ zero times, and~$s_2$ and~$s_3$ both twice. Hence,  the symmetric difference of $\delta_{G^*}(a_4)$ with the cycles~$\Gamma' = \{\gamma_2,\gamma_3\}$ avoids all three of the edges $s_1,$~$s_2,$ and~$s_3$.

Now we show this symmetric difference indeed satisfies the conditions of Observation~\ref{obs:L}, so that it can be applied to yield an upper bound on~$t$ with respect to~$t^*$. Recall that the requirements of Observation~\ref{obs:L} are that the set of edges must contain a path between the endpoints of $a_i$ that uses only edges in $G_{i-1}$. By Lemma~\ref{lem:symmetric_difference}, the symmetric difference contains a path between the endpoints of~$a_i$. By Lemma~\ref{lem:linear_dependent}, we have $\{\delta_{G^*}(a_j) \cap S: j \in I\} = \emptyset$, so therefore the symmetric difference of $\{\delta_{G^*}(a_j): j \in I\}$ does not contain any edges of $S$. This implies that the symmetric difference of $\{\delta_{G^*}(a_i)\}$ and $\Gamma' = \{\gamma_j: j \in J\}$ also does not contain any edges of~$S$. Hence, we have constructed a set of edges that contains a path in~$G_{i-1}$ between the endpoints of~$a_i$, as required.

Observation~\ref{obs:L} implies an upper bound on $t$ in terms of the lengths of all the edges in the symmetric difference of $\{\delta_{G^*}(a_i)\}$ and $\Gamma' = \{\gamma_j: j \in J\}$. In Lemma~\ref{lem:upper_bound_t} we formalise this upper bound. Then, in Lemma~\ref{lem:bound_cij}, we use the fact that the dilation of~$G^*$ is~$t^*$ to obtain an upper bound on the sum of the lengths in each cycle~$\gamma_j$. In Lemma~\ref{lem:bound_cij'}, we strengthen the inequality by giving a lower bound on the length of the edges that are both in $\gamma_i$ and~$S$, and therefore cannot be part of the final symmetric difference. In Theorem~\ref{thm:problem2} we put this all together and prove the final bound~$(k+1) \, t^* > t$.

Let $c_j$ be the total length of edges in the cycle $\gamma_j$. Let $c_j'$ be the total length of edges in the intersection $\delta_{G^*}(a_j) \cap S$. Then Observation~\ref{obs:L} implies:

\begin{lemma}
\label{lem:upper_bound_t}
$d_{G^*}(a_i) + \sum_{j \in J} c_j - \sum_{j \in I} c_j' > t \cdot d_M(a_i)$
\end{lemma}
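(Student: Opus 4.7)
The plan is to apply Observation~\ref{obs:L} to the path in $G_{i-1}$ that was just constructed, namely the path contained in the symmetric difference of $\delta_{G^*}(a_i)$ with the cycles $\{\gamma_j : j \in J\}$. The observation gives a strict lower bound on the length of that path, so it remains to show that the length is at most $d_{G^*}(a_i) + \sum_{j \in J} c_j - \sum_{j \in I} c_j'$. The proof then reduces to two bounds chained end to end.

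First, I would realise the path, via Lemma~\ref{lem:symmetric_difference}, as an Eulerian trail on the connected component of the symmetric difference containing the two endpoints of $a_i$. Because this trail traverses every edge of that component exactly once, its total length equals the total weight of that component, which is in turn at most the total weight of the whole symmetric difference.

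Second, I would bound the weight of the symmetric difference itself. Viewing $\{\delta_{G^*}(a_i)\} \cup \{\gamma_j : j \in J\}$ as a multiset of edges, its aggregate weight is exactly $d_{G^*}(a_i) + \sum_{j \in J} c_j$, and an edge survives in the symmetric difference iff it appears an odd number of times. For any $e \in S$, Lemma~\ref{lem:linear_dependent} ensures that $e$ appears an even number of times across $\{\delta_{G^*}(a_j) : j \in I\}$. Since each cycle $\gamma_j$ differs from $\delta_{G^*}(a_j)$ only by appending the greedy edge $a_j$, which is not an edge of $S$, the same parity conclusion transfers to our multiset. Hence every $S$-edge cancels, and summing the weight of those cancelled edges over $j \in I$ gives exactly $\sum_{j \in I} c_j'$. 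Therefore the weight of the symmetric difference is at most $d_{G^*}(a_i) + \sum_{j \in J} c_j - \sum_{j \in I} c_j'$.

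Chaining the two upper bounds with the strict lower bound from Observation~\ref{obs:L} finishes the proof. The step that requires the most care is verifying that the parity statement of Lemma~\ref{lem:linear_dependent} transfers correctly to the multiset appearing here, since each $\gamma_j$ carries the extra edge $a_j$; this is where one must note that greedy edges do not lie in $S$. The remaining accounting is direct, and any non-$S$ edges that also happen to cancel only strengthen the inequality.
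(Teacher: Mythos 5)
Your proof is correct and follows essentially the same route as the paper's: bound the total edge weight of the multiset $\{\delta_{G^*}(a_i)\} \cup \{\gamma_j : j \in J\}$ by $d_{G^*}(a_i) + \sum_{j \in J} c_j$, use Lemma~\ref{lem:linear_dependent} to argue that the $S$-edges cancel so that $\sum_{j \in I} c_j'$ may be subtracted, and apply Observation~\ref{obs:L} to the path supplied by Lemma~\ref{lem:symmetric_difference}. Your added care about the parity transfer leans on the assertion that greedy edges never lie in $S$, which is not actually guaranteed; however, the paper's proof makes the same tacit assumption, and in that corner case the bound still holds (a surviving $a_j \in S$ with $j \in J$ belongs to $G_{i-1}$ and its cycle occurrence is never subtracted in $\sum_{j\in I} c_j'$), so this is not a gap beyond what the paper itself elides.
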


\begin{proof}
The total length of all edges in $\delta_{G^*}(a_i)$ is $d_{G^*}(a_i)$. The total length of all edges in $\gamma_j$ is $c_j$. Taking the sum $d_{G^*}(a_i) + \sum_{j \in J} c_j$ yields an upper bound on the total length of all edges in the symmetric difference $\{\delta_{G^*}(a_i)\}$ and $\{\gamma_j: j \in J\}$. However, this total length includes edges in $S$, in particular, it includes the total length of all edges in the intersections $\{\delta_{G^*}(a_j) \cap S: j \in I\}$. We know from Lemma~\ref{lem:linear_dependent} that no edge in $S$ appears in the symmetric difference, so we do not need to include any of the edges in $\{\delta_{G^*}(a_j) \cap S: j \in I\}$ in the total length. Hence, $d_{G^*}(a_i) + \sum_{j \in J} c_j - \sum_{j \in I} c_j'$ is an upper bound on the total length of the edges in the symmetric difference. Since the symmetric difference contains a path in $G_{i-1}$ between the endpoints of $a_i$, Observation~\ref{obs:L} implies the stated inequality.
\end{proof}

Next, we use the relationship between $\gamma_j$ and the graph~$G^*$ to obtain an upper bound on~$c_j$.

\begin{lemma}
\label{lem:bound_cij}
$c_j \leq (t^* + 1) \cdot d_M(a_j)$
\end{lemma}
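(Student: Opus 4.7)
The plan is essentially a one-line unpacking of Definition~\ref{defn:gamma_ij} followed by invoking the definition of $t^*$. By construction, the cycle $\gamma_j$ consists of the path $\delta_{G^*}(a_j)$ together with the single edge $a_j$, joined end to end. These two pieces share only their endpoints (the two endpoints of $a_j$), so the total edge length of $\gamma_j$ decomposes cleanly as
\[
c_j \;=\; d_{G^*}(a_j) \;+\; d_M(a_j).
\]

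Next I would use the fact that $G^* = G \cup S$ has dilation $t^*$ (Definition~\ref{defn:S}). Applied to the pair of endpoints of $a_j$, this gives $d_{G^*}(a_j) \leq t^* \cdot d_M(a_j)$. Substituting into the decomposition above yields
\[
c_j \;\leq\; t^* \cdot d_M(a_j) + d_M(a_j) \;=\; (t^* + 1)\cdot d_M(a_j),
\]
as required.

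There is no real obstacle here; the only small thing to be careful about is that $d_M(a_j)$ is well-defined as the length of the edge $a_j$ in the underlying metric, and that it equals the metric distance between its endpoints, so the dilation inequality for $G^*$ applies directly to the pair of endpoints of $a_j$. Everything else is bookkeeping on the definition of $\gamma_j$.
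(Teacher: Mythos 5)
Your argument is correct and matches the paper's proof exactly: decompose $c_j = d_{G^*}(a_j) + d_M(a_j)$ from the definition of $\gamma_j$, then apply the dilation bound $d_{G^*}(a_j) \leq t^* \cdot d_M(a_j)$ for $G^*$. Nothing further is needed.
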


\begin{proof}
Recall from Definition~\ref{defn:gamma_ij} that the cycle~$\gamma_j$ is the path $\delta_{G^*}(a_j)$ and the edge $a_j$ linked end to end. Since the dilation of $G^*$ is $t^*$, we have $d_{G^*}(a_j) \leq t^* \cdot d_M(a_j)$. Therefore, $c_j = d_{G^*}(a_j) + d_M(a_j) \leq (t^* + 1) \cdot d_M(a_j)$.
\end{proof}

We strengthen the inequality in Lemma~\ref{lem:upper_bound_t} by providing a lower bound on the edges that are in~$\gamma_j$ but cannot be part of the final symmetric difference. 

\begin{lemma}
\label{lem:bound_cij'}
If $t \geq (k+1) \, t^*$, then $\frac k {k+1} \cdot d_M(a_j) \leq c_j'$ for all $j$.
\end{lemma}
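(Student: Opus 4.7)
My plan is to argue by contradiction. Assume for contradiction that $c_j' < \frac{k}{k+1}\, d_M(a_j)$. I will combine this with the greedy defining property of $a_j$ to build a walk in $G_{j-1}$ between the endpoints of $a_j$ that is too short, contradicting Observation~\ref{obs:L}.

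The first step is a dichotomy on the $S$-edges that appear along $\delta_{G^*}(a_j)$. If some such edge $s$ has $d_M(s) \geq d_M(a_j)$, then already $c_j' \geq d_M(s) \geq d_M(a_j) > \frac{k}{k+1}\, d_M(a_j)$, contradicting the assumption. Hence I may assume that every $S$-edge on $\delta_{G^*}(a_j)$ satisfies $d_M(s) < d_M(a_j)$. By Definition~\ref{defn:greedy_algorithm}, $a_j$ is the \emph{shortest} edge in $V\times V$ with $d_{G_{j-1}}(\cdot) > t\cdot d_M(\cdot)$, so every such strictly shorter edge $s$ must instead satisfy $d_{G_{j-1}}(s) \leq t\cdot d_M(s)$.

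Next I construct the walk. Non-$S$ edges of $\delta_{G^*}(a_j)$ lie in $G \subseteq G_{j-1}$ and are kept as is; each $S$-edge $s$ on the path is replaced by its $G_{j-1}$-shortest path, which has length at most $t\cdot d_M(s)$ by the previous step. This produces a walk in $G_{j-1}$ between the endpoints of $a_j$ of total length at most
\[
(d_{G^*}(a_j) - c_j') + t\cdot c_j' \;\leq\; t^*\, d_M(a_j) + (t-1)\, c_j'.
\]
The walk's length is at least $d_{G_{j-1}}(a_j) > t\cdot d_M(a_j)$ by Observation~\ref{obs:L}, which gives $t^* d_M(a_j) + (t-1)c_j' > t\cdot d_M(a_j)$, i.e., $c_j' > \frac{t-t^*}{t-1}\, d_M(a_j)$. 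A short algebraic manipulation using $t \geq (k+1)t^*$ shows $\frac{t-t^*}{t-1} \geq \frac{k}{k+1}$, contradicting the starting assumption.

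The main obstacle I anticipate is the dichotomy in the second paragraph: the greedy characterisation of $a_j$ only controls edges that are strictly shorter than $a_j$ in metric length, so I must first peel off the case of an $S$-edge as long as $a_j$ before I can uniformly replace every $S$-edge along $\delta_{G^*}(a_j)$ with a $G_{j-1}$-path of length at most $t\cdot d_M(s)$. Once that case is split off, the remainder is a single walk construction plus an elementary comparison of $\frac{t-t^*}{t-1}$ with $\frac{k}{k+1}$.
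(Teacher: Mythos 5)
Your proof is correct and follows essentially the same route as the paper: argue by contradiction, note that under the assumption every edge of $\delta_{G^*}(a_j)\cap S$ is shorter than $a_j$ and hence satisfies $d_{G_{j-1}}(s)\leq t\cdot d_M(s)$ by the greedy choice, replace those edges by their $G_{j-1}$-shortest paths, and contradict $d_{G_{j-1}}(a_j) > t\cdot d_M(a_j)$. The only differences are cosmetic: the paper phrases the replacement via symmetric differences with cycles and plugs $t\geq(k+1)t^*$ into the chain of inequalities directly, whereas you subtract $c_j'$ (a slightly tighter accounting) and defer the hypothesis to the final comparison $\frac{t-t^*}{t-1}\geq\frac{k}{k+1}$.
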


\begin{proof}
First, we prove the inequality 
\[
d_{G_{j-1}}(a_j) \leq d_{G^*}(a_j) + \sum_{s \in \delta_{G^*}(a_j) \cap S} d_{G_{j-1}}(s).
\]
We do so in a similar manner to Lemma~\ref{lem:upper_bound_t}. We construct a path in $G_{j-1}$ between the endpoints of $a_j$ that has length $d_{G^*}(a_j) + \sum_{s \in \delta_{G^*}(a_j) \cap S} d_{G_{j-1}}(s)$. We start with the path $d_{G^*}(a_j)$. We modify it taking the symmetric difference of $d_{G^*}(a_j)$ with a set of cycles $\beta = \{\beta_s: s \in \delta_{G^*}(a_j) \cap S\}$. The cycle $\beta_s$ is formed by linking the path $d_{G_{j-1}}(s)$ and the edge $s$ end to end. The cycle $\beta_s$ replaces every edge $s \in \delta_{G^*}(a_j) \cap S$ with the path $d_{G_{j-1}}(s) \in G_{j-1}$. Hence, the symmetric difference of $d_{G^*}(a_j)$ with the set~$\beta$ is a path in $G_{j-1}$ between the endpoints of $a_j$. Therefore, we have $d_{G_{j-1}}(a_j) \leq d_{G^*}(a_j) + \sum_{s \in \delta_{G^*}(a_j) \cap S} d_{G_{j-1}}(s)$. 

Suppose for sake of contradiction that $\frac k {k+1} \cdot d_M(a_j) > c_j'$. Consider any $s \in \delta_{G^*}(a_j) \cap S$. Then $s$ is shorter than $a_j$, since $d_M(a_j) > \frac k {k+1} \cdot d_M(a_j) > c_j' \geq d_M(s)$. In the graph $G_{j-1}$, the edge $a_j$ is a shortest edge satisfying $d_{G_{j-1}}(a_j) > t \cdot d_M(a_j)$. Since $s$ is shorter than $a_j$, we must have that $d_{G_{j-1}}(s) \leq t \cdot d_M(s)$. Now,

\[
    \begin{array}{rcl}
        d_{G_{j-1}}(a_j) 
        &\leq& d_{G^*}(a_j) + \sum_{s \in \delta_{G^*}(a_j) \cap S} d_{G_{j-1}}(s) \\
        &\leq& t^* \cdot d_M(a_j) + t \cdot \sum_{s \in \delta_{G^*}(a_j) \cap S} d_M(s) \\
        &=& t^* \cdot d_M(a_j) + t \cdot c_j' \\
        &<& t^* \cdot d_M(a_j) + t\cdot \frac k {k+1} d_M(a_j) \\
        &\leq& t \cdot \frac 1  {k+1} d_M(a_j) + t \cdot \frac k {k+1} d_M(a_j) \\
        &=& t \cdot d_M(a_j)
    \end{array}
\]
where the second last line is given by $t \geq (k+1) \, t^*$. But we know from Definition~\ref{defn:greedy_algorithm} that $d_{G_{j-1}}(a_j) > t \cdot d_M(a_j)$, so we obtain a contradiction. Therefore, we must have $\frac k {k+1} \cdot d_M(a_j) \leq c_j'$.
\end{proof}

Using Lemmas~\ref{lem:upper_bound_t}-\ref{lem:bound_cij'} we are able to prove the main result of this section.

\begin{theorem}
\label{thm:k+1_bound}
Suppose the greedy algorithm adds $k+1$ edges into the graph. Then $(k+1) \, t^* > t$.
\end{theorem}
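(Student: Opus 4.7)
The plan is to argue by contradiction: assume $t \geq (k+1)\,t^*$ and derive $t < (k+1)\,t^*$ by combining the three preceding lemmas. This is a natural choice because Lemma~\ref{lem:bound_cij'} is only available under the hypothesis $t \geq (k+1)\,t^*$, and that lemma is exactly what supplies the negative term needed to cancel the growth of $\sum c_j$.

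First I would instantiate Lemma~\ref{lem:upper_bound_t} to get the anchor inequality $t\cdot d_M(a_i) < d_{G^*}(a_i) + \sum_{j\in J} c_j - \sum_{j\in I} c_j'$. I would then upper-bound each positive term on the right: bounding $d_{G^*}(a_i) \leq t^* \cdot d_M(a_i)$ using the dilation of $G^*$, and bounding $c_j \leq (t^*+1)\cdot d_M(a_j)$ via Lemma~\ref{lem:bound_cij}. For the negative term, I would invoke Lemma~\ref{lem:bound_cij'} under the contradiction hypothesis to obtain $c_j' \geq \tfrac{k}{k+1}\,d_M(a_j)$ for every $j \in I$, so $-\sum_{j\in I} c_j' \leq -\tfrac{k}{k+1}\sum_{j\in I} d_M(a_j)$.

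The key structural observation I would use next is that the greedy algorithm processes candidate edges in order of increasing weight, hence $d_M(a_1)\leq d_M(a_2)\leq\cdots\leq d_M(a_{k+1})$, and since $i=\max I$ we have $d_M(a_j)\leq d_M(a_i)$ for all $j\in J$, with $|J|\leq k$. Splitting $I = J\cup\{i\}$ in the $c_j'$ sum and collecting the $d_M(a_i)$ and $\sum_{j\in J} d_M(a_j)$ contributions, the coefficients combine as $t^* - \tfrac{k}{k+1}$ on $d_M(a_i)$ and $t^* + 1 - \tfrac{k}{k+1} = t^* + \tfrac{1}{k+1}$ on $\sum_{j\in J} d_M(a_j)$. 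Using $\sum_{j\in J} d_M(a_j) \leq k\cdot d_M(a_i)$ and simplifying, the $\pm\tfrac{k}{k+1}\,d_M(a_i)$ contributions cancel and the right-hand side collapses to $(k+1)\,t^*\cdot d_M(a_i)$. This yields $t\cdot d_M(a_i) < (k+1)\,t^*\cdot d_M(a_i)$, contradicting $t\geq (k+1)\,t^*$, so we must have $(k+1)\,t^* > t$.

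I expect the only real obstacle to be keeping the arithmetic honest: making sure the $\tfrac{k}{k+1}$ terms on $d_M(a_i)$ and on the $J$-sum are combined with the correct signs so that everything telescopes, and double-checking that using $d_M(a_j)\leq d_M(a_i)$ in the positive $c_j$ sum and simultaneously in the negative $c_j'$ sum is consistent (it is, because the ordering cuts both ways but we apply the bound only on the positive side, which is the direction that weakens the inequality and is therefore safe).
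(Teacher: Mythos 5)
Your proposal is correct and follows essentially the same route as the paper: assume $t \geq (k+1)\,t^*$, combine Lemma~\ref{lem:upper_bound_t} with Lemma~\ref{lem:bound_cij} and Lemma~\ref{lem:bound_cij'}, and use the monotonicity of the greedy edge lengths together with $|I| \leq k+1$ (i.e.\ $|J|\leq k$) to collapse the right-hand side to $(k+1)\,t^*\cdot d_M(a_i)$. The only difference is cosmetic bookkeeping: the paper uses the $c_j'$ lower bound to cancel the $\sum_{j\in J} d_M(a_j)$ term before invoking $d_M(a_j)\leq d_M(a_i)$, whereas you collect coefficients first and apply $\sum_{j\in J} d_M(a_j)\leq k\,d_M(a_i)$ at the end; the resulting bound and conclusion are identical.
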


\begin{proof}
Combining Lemmas~\ref{lem:upper_bound_t}  and~\ref{lem:bound_cij} yields:
\[
\begin{array}{rcll}
        t \cdot d_M(a_i)
        &<& d_{G^*}(a_i) + \sum_{j \in J} c_j - \sum_{j \in I} c_j' \\
        &\leq& t^* \cdot d_M(a_i) + \sum_{j \in J} (t^* + 1) d_M(a_j) - \sum_{j \in I} c_j' \\
        &=& t^* \cdot \sum_{j \in I} d_M(a_j) + \sum_{j \in J} d_M(a_j) - \sum_{j \in I} c_j'
    \end{array}
\]

Suppose for sake of contradiction that $t \geq (k+1) \, t^*$. By Lemma~\ref{lem:bound_cij'} we have $\frac k {k+1} d_M(a_j) \leq c_j'$. Summing over all $j \in I$ yields:
\[
    \begin{array}{rcl}
    \sum_{j \in I} c_j'
    &\geq& \sum_{j \in I} \frac k {k+1} d_M(a_j) \\
    &=& \frac k {k+1} d_M(a_i) + \sum_{j \in J} \frac k {k+1} d_M(a_j) \\
    &\geq& \sum_{j \in J} (\frac k {k+1} d_M(a_j) +  \frac 1 {k+1} d_M(a_i))\\
    &\geq& \sum_{j \in J} d_M(a_j)
    \end{array}
\]
The final step is because $j < i$ so $a_j$ is not longer than $a_i$. Therefore,

\[
\begin{array}{rcll}
        t \cdot d_M(a_i)
        &<& t^* \sum_{j \in I} d_M(a_j) + \sum_{j \in J} d_M(a_j) - \sum_{j \in I} c_j'\\
        &\leq& t^* \sum_{j \in I} d_M(a_j) \\
        &\leq& t^* \cdot (k+1) \cdot d_M(a_i)
    \end{array}
\]
which implies $(k+1) \, t^* > t$, as required.
\end{proof}

\subsection{Running time analysis}
\label{sec:greedy_running_time}

We analyse the running time of the greedy algorithm. Recall that the greedy algorithm sorts the edges in $\{V \times V\} \setminus E$ by increasing length and then processes them in order. Processing an edge $e$ entails a shortest path query. If $d_G(e) > t \cdot d_M(e)$, then the edge $e$ is added to $G$, otherwise it is discarded. 

Our algorithm performs efficient shortest path queries by building and maintaining an all pairs shortest paths (APSP) data structure for each of the graphs~$G_i$. When an edge $pq$ is added to the graph, the data structure updates the length of the shortest path between every pair of points $u, v \in V$. We compute the length of the three paths $u \to v$, $u \to p \to q \to v$, and $u \to q \to p \to v$, and choose the minimum length. For a fixed $u,v \in V$, this can be handled in constant time, since all pairwise distances are stored.

Hence, the overall running time of the algorithm is as follows. In preprocessing, we build the APSP data structure in $O(mn + n^2 \log n)$ time. Sorting the edges in $\{V \times V\} \setminus E$ takes $O(n^2 \log n)$ time. Querying whether $d_G(e) > t \cdot d_M(e)$ can be handled in constant time, and there are at most $O(n^2)$ such queries. Updating the APSP data structure takes $O(n^2)$ time, and there are at most $k+1$ updates. Putting this all together yields:

\begin{theorem}
\label{thm:problem2}
Given an integer $k$, a real number $t$ and a graph $G$ with $n$ vertices and $m$ edges, there is an $O((m + n\log n + kn) \cdot n)$ time algorithm that returns either YES or NO. If the algorithm returns YES, then $t^* \leq t$, otherwise, $t^* > \frac t {k+1}$.
\end{theorem}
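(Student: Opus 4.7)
The plan is to separate the theorem into a correctness claim and a running-time claim. For correctness, I would take the algorithm to be exactly the modified greedy procedure of Definition~\ref{defn:greedy_algorithm}: sort $V \times V \setminus E$ by $d_M$, process edges in increasing order, add $e$ to the current graph if $d_G(e) > t \cdot d_M(e)$, and return YES if the algorithm terminates after processing every candidate edge, or NO as soon as a $(k+1)$-st edge is added. If YES is returned then at most $k$ edges were added, and the stopping condition ensures every remaining pair of vertices is $t$-connected, so the resulting graph is a $t$-spanner on $V$ with at most $k$ new edges, which certifies $t^* \le t$ by definition of $t^*$. If NO is returned, Theorem~\ref{thm:k+1_bound} applies directly and yields $(k+1)\, t^* > t$, which is exactly the required lower bound $t^* > t/(k+1)$.

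For the running time I would account for four contributions separately. Preprocessing computes an all-pairs shortest-paths (APSP) table for the initial graph $G$ in $O(mn + n^2 \log n)$ time via one Dijkstra call per source. Sorting the at most $n^2$ non-edges by $d_M$ costs $O(n^2 \log n)$. Each of the at most $n^2$ processed edges triggers exactly one dilation check, which is a single table lookup, so all checks together cost $O(n^2)$. Each of the at most $k+1$ edge insertions triggers an APSP update, which is performed in $O(n^2)$ time by setting, for every pair $u,v$, the new distance to $\min\{d_{G_i}(u,v),\; d_{G_i}(u,p) + d_M(p,q) + d_{G_i}(q,v),\; d_{G_i}(u,q) + d_M(p,q) + d_{G_i}(p,v)\}$ upon insertion of the edge $pq$. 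Summing gives $O(mn + n^2 \log n + n^2 + kn^2) = O((m + n \log n + kn) \cdot n)$.

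I do not expect a serious obstacle here: the nontrivial correctness half is already packaged by Theorem~\ref{thm:k+1_bound}, and the YES direction is immediate from Definition~\ref{defn:greedy_algorithm}. The only point that needs mild care is that every dilation check really interrogates the current graph $G_i$ rather than a stale snapshot, which holds because each insertion and its accompanying $O(n^2)$ update are performed before the next candidate edge is processed. With this sequencing fixed, the two halves combine directly into the stated theorem.
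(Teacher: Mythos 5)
Your proposal is correct and follows essentially the same route as the paper: the YES case is certified because the at most $k$ added edges yield a $t$-spanner, the NO case is delegated to Theorem~\ref{thm:k+1_bound}, and the running time is obtained by the same accounting (Dijkstra-based APSP preprocessing, sorting the non-edges, constant-time dilation checks, and $O(n^2)$ incremental APSP updates per added edge, at most $k+1$ of them). The only difference is cosmetic: you write out the three-path update rule with the inserted edge's weight explicitly, which the paper describes informally.
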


\section{Minimising the Dilation}
\label{sec:minimising_the_dilation}

We return to Problem~\ref{problem:1}, which is to compute a $(1+\varepsilon)(k+1)$-approximation for the minimum dilation~$t^*$. For any real value~$t$, we can use Theorem~\ref{thm:problem2} to decide whether $t^* \leq t$ or $t^* > \frac t {k+1}$. Hence, it remains only to provide some bounded interval that $t^*$ is guaranteed to be in. Once we have such an interval, then we can binary search on an $\varepsilon$-grid of the interval to obtain a $(1+\varepsilon)(k+1)$-approximation.

We compute this interval in two steps. Our first step is to identify a set $T$ of $O(n^4)$ real numbers so that at least one of these numbers is an $O(n)$-approximation of~$t^*$. Our second step is to use the approximate decision algorithm in Theorem~\ref{thm:problem2} to perform a binary search on the set $T$ and yield an $O(nk^2)$-approximation for $t^*$. The $O(nk^2)$-approximation provides the required interval. 

We begin by identifying the set $T$ of $O(n^4)$ real numbers.

\begin{lemma}
\label{lem:simple_bound}
Define $T =\{\frac{d_M(u,v)} {d_M(p,q)}: u,v,p,q \in V,\, u \neq v,p \neq q\}$. Then there exists an element $t \in T$ such that $t \leq t^* \leq n \cdot t$.
\end{lemma}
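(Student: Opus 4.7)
The plan is to exhibit the element $t \in T$ directly from the structure of the optimal solution $G^*$. Let $(u^*,v^*)$ be a pair of distinct vertices that realises the maximum dilation, so that $t^* = d_{G^*}(u^*,v^*)/d_M(u^*,v^*)$. Fix a shortest $u^*$-$v^*$ path $\pi$ in $G^*$, and let $(p,q)$ be a longest edge appearing on $\pi$. Then set
\[
t \;=\; \frac{d_M(p,q)}{d_M(u^*,v^*)} \;\in\; T.
\]

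To prove $t \le t^*$, I would simply observe that a single edge on $\pi$ cannot be longer than the whole path, so $d_M(p,q) \le d_{G^*}(u^*,v^*) = t^* \cdot d_M(u^*,v^*)$. Dividing by $d_M(u^*,v^*)$ gives the first inequality.

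For the upper bound $t^* \le n \cdot t$, I would use that $\pi$ is a simple path in a graph on $n$ vertices, hence has at most $n-1 \le n$ edges. Since $(p,q)$ is the longest of them, its length is at least the average, so
\[
d_M(p,q) \;\ge\; \frac{d_{G^*}(u^*,v^*)}{n} \;=\; \frac{t^* \cdot d_M(u^*,v^*)}{n}.
\]
Rearranging yields $t^* \le n \cdot t$, and the two inequalities together give the lemma.

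There is no real obstacle here; the only subtlety is to make sure the chosen $(p,q)$ and $(u^*,v^*)$ satisfy the conditions defining $T$, which is immediate ($u^* \ne v^*$ since dilation is measured between distinct points, and $p \ne q$ since $(p,q)$ is an edge). One can also assume without loss of generality that a shortest path in $G^*$ is simple, so that the $n-1$ edge bound holds; otherwise a simpler shortest path of no greater length can be extracted, and the same argument applies.
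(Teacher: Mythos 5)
Your proof is correct and follows essentially the same route as the paper: both take a pair realising the maximum dilation of $G^*$, let $(p,q)$ be a longest edge on the corresponding shortest path, and bound that path's length between $d_M(p,q)$ and $n\cdot d_M(p,q)$ to sandwich $t^*$. The only cosmetic difference is that you argue the upper bound via the longest edge being at least the average edge length, which is the same inequality the paper states directly.
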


\begin{proof}
Consider the graph $G^* = (V, E \cup S)$. Let the dilation of $t^*$ be attained by the pair of points $u, v \in V$. Let $pq$ be a longest edge along the shortest path from $u$ to $v$ in $G^*$. See Figure~\ref{fig:uvpq}.

\begin{figure}[tb]
    \centering
    \includegraphics{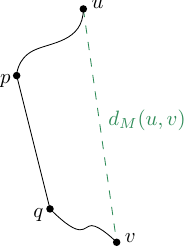}
    \caption{The edge $pq$ is a longest edge on the shortest path from $u$ to $v$.}
    \label{fig:uvpq}
\end{figure}

Recall that $d_{G^*}(u,v)$ is the length of the shortest path from $u$ to $v$ in the graph $G^*$. The dilation of $t^*$ is attained by the pair of points $u,v$, which implies $d_{G^*}(u,v) = t^* \cdot d_M(u,v)$. The shortest path from $u$ to $v$ has total length $d_{G^*}(u,v)$ and has at most $n$ edges, where the length of each edge is at most $d_M(p,q)$. This implies $d_M(p,q) \leq d_{G^*}(u,v) \leq n \cdot d_M(p,q)$. But $d_{G^*}(u,v) = t^* \cdot d_M(u,v)$, so this inequality rearranges to give $$\frac{d_M(p,q)} {d_M(u,v)} \leq t^* \leq n \cdot \frac{d_M(p,q)} {d_M(u,v)},$$
as required.
\end{proof}

Next, we use the approximate decision algorithm in Theorem~\ref{thm:problem2} to binary search the set $T =\{\frac{d_M(u,v)} {d_M(p,q)}: u,v,p,q \in V,\, u \neq v,p \neq q\}$ in order to yield an $O(nk^2)$-approximation. A na\"ive implementation of the binary search would entail computing and sorting the elements in $T$, which would require $O(n^4 \log n)$ time. To speed up our algorithm, we avoid the $O(n^4 \log n)$ preprocessing step, and we do so by using the result of Mirzaian and Arjomandi~\cite{DBLP:journals/ipl/MirzaianA85}. The result states that given two sorted lists $X$ and $Y$ each of size~$n$, one can select the~$i^{th}$ smallest element of the set $X+Y = \{x+y: x \in X,\, y \in Y\}$ in $O(n)$ time.

\begin{lemma}
\label{lem:nk^2-apx}
There is an $O((m + n\log n + kn) \cdot n \log n)$ time algorithm that computes an $O(nk^2)$-approximation for $t^*$.
\end{lemma}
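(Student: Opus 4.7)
The plan is to binary search the sorted set $T$ using Theorem~\ref{thm:problem2} as the decision oracle, while using Mirzaian and Arjomandi's selection routine to avoid materialising the $O(n^4)$ elements of $T$. Since each call to Theorem~\ref{thm:problem2} costs $O((m+n\log n+kn)\cdot n)$ and binary search on $|T|=O(n^4)$ takes $O(\log n)$ rounds, the claimed time bound follows as long as each round's selection step also fits inside this budget.

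For the selection, I would take logarithms and write ratios as differences: let $X=\{\log d_M(u,v): u\neq v\}$ and $Y=\{-\log d_M(p,q): p\neq q\}$, both sorted lists of size $O(n^2)$ built in $O(n^2\log n)$ preprocessing. Then $\log T$ equals $X+Y$, and Mirzaian--Arjomandi selects its $i$-th smallest element in $O(n^2)$ time. Each binary-search iteration (i) selects the middle-index element via Mirzaian--Arjomandi, (ii) queries Theorem~\ref{thm:problem2} on it, and (iii) recurses on the lower half on YES or the upper half on NO. After $O(\log n)$ iterations we converge to adjacent $t_\ell<t_r$ in $T$ with decision$(t_\ell)=$ NO and decision$(t_r)=$ YES, so $t^*>t_\ell/(k+1)$ and $t^*\leq t_r$. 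Summing over rounds yields the claimed $O((m+n\log n+kn)\cdot n\log n)$ time.

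For the approximation factor, I would combine these two inequalities with Lemma~\ref{lem:simple_bound}: some $t_0\in T$ satisfies $t_0\leq t^*\leq nt_0$, and since $t_\ell,t_r$ are adjacent in $T$ this $t_0$ forces either $t^*=t_r$ (if $t_0\geq t_r$) or $t^*\leq nt_\ell$ (if $t_0\leq t_\ell$). The hard part is outputting a single $\hat t$ that is simultaneously an upper bound on $t^*$ and within an $O(nk^2)$ factor, because $t_r$ alone can be arbitrarily far above $t^*$ in the second case while $nt_\ell$ need not exceed $t^*$ in the first. I would resolve this by augmenting $T$ to $\tilde T=T\cup n(k+1)T$ before running the search. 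This preserves the $X+Y$ structure (replace $Y$ by $Y\cup(Y+\log(n(k+1)))$, still of size $O(n^2)$ and mergeable to sorted order in $O(n^2)$ time) and ensures that $n(k+1)t_0\in\tilde T$ is a guaranteed YES, since $n(k+1)t_0\geq (k+1)t^*$ triggers the contrapositive of the NO condition in Theorem~\ref{thm:problem2}. Hence the smallest YES encountered, our output $\hat t$, satisfies $t^*\leq\hat t\leq n(k+1)t_0\leq n(k+1)t^*$, giving the claimed $O(nk^2)$-approximation.
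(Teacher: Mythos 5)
Your proposal is correct and shares the paper's core machinery (the log transform, Mirzaian--Arjomandi selection from $X+Y$ to avoid materialising $T$, and $O(\log n)$ rounds each making $O(1)$ calls to Theorem~\ref{thm:problem2}), but it handles the central difficulty --- that the oracle's answers are only approximate, so a single query at $t_i$ does not tell you on which side of $t_i$ the witness of Lemma~\ref{lem:simple_bound} lies --- by a genuinely different mechanism. The paper queries the oracle twice per round, at the scaled values $\frac{2}{3}t_i$ and $n(k+1)t_i$, and maintains the invariant that the witness $t_0$ with $t_0 \leq t^* \leq n t_0$ stays inside the search range, halting in the ``middle'' case with a $\frac{3}{2}n(k+1)^2$-factor interval; you instead scale the candidate set, augmenting it to $\tilde T = T \cup n(k+1)T$ (which preserves the sum structure), query each candidate once, and output the smallest YES. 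Your variant is sound and in fact yields the slightly better factor $n(k+1) = O(nk)$, with the same asymptotic running time. One step you should make explicit: since the decision procedure of Theorem~\ref{thm:problem2} is not known to be monotone in $t$, ``$n(k+1)t_0$ is a guaranteed YES'' does not by itself bound the converged value, because the search could in principle drift past an unqueried guaranteed-YES position. The fix is the same contrapositive applied to the whole upper tail: every candidate $t \geq (k+1)t^*$, hence every element of $\tilde T$ that is at least $n(k+1)t_0$, is a forced YES, so every NO answer occurs at a value below $(k+1)t^* \leq n(k+1)t_0$; since the candidates are sorted, the search's lower boundary never advances past the position of $n(k+1)t_0$, and therefore the smallest YES you output satisfies $t^* \leq \hat t \leq n(k+1)t_0 \leq n(k+1)t^*$, as you claim. (The paper's witness-tracking three-case analysis sidesteps this monotonicity issue in the same implicit way.)
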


\begin{proof}
In a preprocessing step, construct and sort the sets $X = \{\log(d_M(u,v)): u, v \in V,\, u \neq v\}$ and $Y = \{- \log(d_M(p,q)): p, q \in V,\, p \neq q\}$. To perform the binary search, select the $i^{th}$ smallest element of~$X+Y = \{\log(\frac {d_M(u,v)}{d_M(p,q)}): u,v,p,q \in V,\, u \neq v, p \neq q\}.$ Reverse the $\log$ transformation to obtain the $i^{th}$ smallest element of~$T$.\footnote{Alternatively, we believe it is possible to modify the algorithm of Mirzaian and Arjomandi~\cite{DBLP:journals/ipl/MirzaianA85} to select the~$i^{th}$ smallest element of the set $X/Y = \{x/y: x \in X,\, y \in Y\}$ in $O(n)$ time. If so, this may be preferred over using the log function, depending on the choice of model of computation.} Call this element~$t_i \in T$. Apply Theorem~\ref{thm:problem2} to the two dilation values $\frac 2 3 \cdot t_i$ and $n(k+1) \cdot t_i$. This returns one of three possibilities:

\begin{enumerate}
    \item $t^* \leq \frac 2 3 \cdot t_i$ and $t^* \leq n(k+1) \cdot t_i$, or
    \item $t^* > \frac 2 3 \cdot \frac {t_i} {k+1}$ and $t^* \leq n(k+1) \cdot t_i$, or
    \item $t^* > \frac 2 3 \cdot \frac {t_i} {k+1}$ and $t^* > n \cdot t_i$. 
\end{enumerate}

The fourth combination cannot occur as it yields a contradiction. Notice that in case one, we have $t^* < t_i$, so the element $t \in T$ satisfying $t \leq t^* \leq n \cdot t$ must be less than $t_i$. We can continue the binary search over the elements in $T$ that are less than $t_i$. Similarly, in case three, we have $t^* > n \cdot t_i$, so the element $t \in T$ satisfying $t \leq t^* \leq n \cdot t$ must be greater than $t_i$. We can continue the binary search over the elements in $T$ that are greater than $t_i$. In case two we halt, since we have an $O(nk^2)$-approximation for~$t^*$.

We analyse the running time of this algorithm. Sorting the sets $X$ and $Y$ takes $O(n^2 \log n)$ time. For each of the $O(\log n)$ binary search step, selecting the $i^{th}$ element of $X+Y$ takes $O(n^2)$ time~\cite{DBLP:journals/ipl/MirzaianA85}. For each of the $O(\log n)$ binary search steps, applying Theorem~\ref{thm:problem2} takes $O((m + n\log n + kn) \cdot n)$. Putting this all together yields the stated running time.
\end{proof}

Finally, we apply a multiplicative $\varepsilon$-grid to the $O(nk^2)$-approximation to yield an $(1+\varepsilon)(k+1)$-approximation. 

\begin{theorem}
\label{thm:problem1}
For any fixed $\varepsilon > 0$, there is an $O((m + n\log n + kn) \cdot n \log n)$ time algorithm that computes a $(1+\varepsilon)(k+1)$-approximation for $t^*$, where $t^*$ is the minimum dilation of $G' = (V, E \cup S)$ over all sets $S$ where $S \subseteq V \times V$ and $|S| = k$.
\end{theorem}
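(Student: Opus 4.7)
My plan is to combine the coarse $O(nk^2)$-approximation from Lemma~\ref{lem:nk^2-apx} with a scan over a multiplicative $(1+\varepsilon)$-grid, using the decision procedure of Theorem~\ref{thm:problem2} as the comparison oracle.

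First I would invoke Lemma~\ref{lem:nk^2-apx} to obtain a value $T$ with $t^* \leq T \leq \alpha\cdot t^*$ for some $\alpha = O(nk^2)$. This localises $t^*$ within the multiplicative window $[T/\alpha,\, T]$. I then lay down the geometric grid $t_i = T(1+\varepsilon)^{-i}$ for integer $i$ ranging over an interval wide enough that $t_i$ exceeds $(k+1)T$ at one end and drops below $T/\alpha$ at the other; in total the grid has $O(\log(\alpha(k+1))/\varepsilon) = O(\log n/\varepsilon)$ points, which is $O(\log n)$ for any fixed~$\varepsilon$.

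Next I would scan the grid in order of decreasing $t_i$, calling Theorem~\ref{thm:problem2} at each point. At the top of the grid, where $t \geq (k+1)T \geq (k+1)t^*$, YES is forced; at the bottom, where $t < T/\alpha \leq t^*$, NO is forced, so a YES-to-NO transition must occur somewhere inside the grid. Let $i^*$ be the first index at which the oracle flips from YES to NO. The YES answer at $t_{i^*-1}$ yields $t^* \leq t_{i^*-1}$, and the NO answer at $t_{i^*} = t_{i^*-1}/(1+\varepsilon)$ yields $t^* > t_{i^*}/(k+1) = t_{i^*-1}/((1+\varepsilon)(k+1))$. Together these give $t^* \leq t_{i^*-1} \leq (1+\varepsilon)(k+1)\cdot t^*$, certifying $t_{i^*-1}$ as a $(1+\varepsilon)(k+1)$-approximation for $t^*$.

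The main subtlety I foresee is the ambiguous regime $t/(k+1) < t^* \leq t$ of Theorem~\ref{thm:problem2}, where the oracle is permitted to return either answer, so the sequence of YES/NO answers along the scan need not be monotone. This is not an obstacle for correctness: the first YES-to-NO flip, wherever in the ambiguous band it occurs, still sandwiches $t^*$ between consecutive grid points whose ratio is $(1+\varepsilon)$, and after inflating by the $(k+1)$ slack of the decision oracle one gets exactly the claimed approximation factor. For the running time, the single call to Lemma~\ref{lem:nk^2-apx} costs $O((m+n\log n+kn)\cdot n\log n)$, and the $O(\log n)$ additional calls to Theorem~\ref{thm:problem2}, each costing $O((m+n\log n+kn)\cdot n)$, are absorbed into the same bound.
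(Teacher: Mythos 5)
Your proposal is correct and follows essentially the same route as the paper: obtain the coarse $O(nk^2)$-approximation from Lemma~\ref{lem:nk^2-apx}, then refine over a multiplicative $(1+\varepsilon)$-grid using the decision procedure of Theorem~\ref{thm:problem2}, with the forced YES/NO answers at the extremes and the per-call guarantees handling the ambiguous band exactly as you argue. The only cosmetic difference is that you scan the $O(\log n)$ grid points linearly while the paper binary searches the grid; both stay within the stated $O((m+n\log n+kn)\cdot n\log n)$ bound since Lemma~\ref{lem:nk^2-apx} dominates.
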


To simplify the running time, we note that if $k \geq n-1$, then adding the minimum spanning tree to any graph makes it an $n$-spanner, which is a $(k+1)$-approximation for the minimum dilation. Plugging in $k < n-1$ and $m \leq n^2$ into Theorem~\ref{thm:problem1} yields:

\begin{theorem}
For any fixed $\varepsilon > 0$, there is an $O(n^3 \log n)$ time algorithm that computes an $(1+\varepsilon)(k+1)$-approximation for $t^*$, where $t^*$ is the minimum dilation of $G' = (V, E \cup S)$ over all sets $S$ where $S \subseteq V \times V$ and $|S| = k$.
\end{theorem}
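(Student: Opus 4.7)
The statement is a clean corollary of Theorem~\ref{thm:problem1}, and the plan is simply to do a case split on $k$ and simplify the running time expression in each case.

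The plan is to write the algorithm as two cases. If $k < n-1$, then I would run the algorithm of Theorem~\ref{thm:problem1} directly. To bound its running time, I would substitute $m \leq n^2$ (trivial upper bound on edges in a simple graph) and $kn < n^2$ into the expression $O((m + n\log n + kn)\cdot n\log n)$, obtaining $O(n^2 \cdot n \log n) = O(n^3 \log n)$ as desired.

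The main issue is handling the regime $k \geq n-1$, where the $kn$ term in Theorem~\ref{thm:problem1} could blow up. Here I would abandon the greedy procedure altogether and output a much cruder solution: compute a minimum spanning tree of the metric on $V$ and add its edges (using up at most $n-1 \leq k$ of the edge budget; pad with arbitrary extra edges if needed). The key fact I would invoke is the classical MST bottleneck property, namely that for any pair $u,v \in V$, the path in the MST between $u$ and $v$ uses only edges of metric length at most $d_M(u,v)$. Hence this path has length at most $(n-1)\cdot d_M(u,v)$, so the resulting augmented graph has dilation at most $n-1 \leq k+1$. Since $t^* \geq 1$ always, this already gives a $(k+1)$-approximation, and in particular a $(1+\varepsilon)(k+1)$-approximation. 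The MST and padding can be computed in $O(n^2 \log n) \subseteq O(n^3 \log n)$ time.

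Putting the two cases together yields an algorithm running in $O(n^3 \log n)$ time achieving the claimed approximation factor. I don't anticipate any real obstacle: the only thing that needs a moment of care is verifying that the MST gives an $(n-1)$-spanner in a general metric, which follows from the cycle property of MSTs.
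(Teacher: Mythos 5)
Your proposal is correct and matches the paper's own argument: the paper likewise plugs $m \leq n^2$ and $k < n-1$ into Theorem~\ref{thm:problem1}, and for $k \geq n-1$ observes that adding the minimum spanning tree yields an $n$-spanner, hence a $(k+1)$-approximation. Your MST bottleneck justification and the $t^* \geq 1$ remark just make explicit what the paper states in passing.
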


\section{Approximation factor no better than $(1-\varepsilon)(k+1)$}
\label{sec:greedy_lower_bound}

We provide a construction to show that the algorithms in Theorem~\ref{thm:problem2} and Theorem~\ref{thm:problem1} cannot yield an approximation factor better than $(1-\varepsilon)(k+1)$.

\begin{theorem}
For any $k \geq 1$ and $\varepsilon > 0$, there exists a graph so that for any $t \leq (1-\varepsilon)(k+1) \cdot t^*$, the greedy algorithm in Definition~\ref{defn:greedy_algorithm} adds at least $k+1$ edges to the graph.
\end{theorem}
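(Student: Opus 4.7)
The plan is to exhibit a metric graph for which greedy is forced to add $k+1$ edges whenever $t<(k+1)t^*$. I take $V=\{v_0,\ldots,v_{k+1},u\}$ and define $d_M(v_i,v_j)=|i-j|$ together with $d_M(u,v_i)=H$ (a valid metric once $H\ge (k+1)/2$), and let $G$ consist of the star $\{(u,v_i):0\le i\le k+1\}$ plus a single chord $(v_0,v_{k+1})$ of length $k+1$. The parameter $H$ is chosen at the very end.

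First I will compute $t^*$. For the upper bound, the set $S=\{(v_{i-1},v_i):1\le i\le k\}$ has size $k$ and makes $G\cup S$ a $(2k+1)$-spanner: every pair has dilation $1$ except $(v_k,v_{k+1})$, whose shortest path $v_k\to v_{k-1}\to\cdots\to v_0\to v_{k+1}$ has length $k+(k+1)=2k+1$. For the lower bound, I will argue that any $S$ of $k$ edges lies entirely among the $v_i$'s (all star edges are already in $G$), so the induced $v$-subgraph of $G\cup S$ has exactly $k+1$ edges, one of which is the chord. I then split into two cases. If the $v$-subgraph is disconnected, some consecutive pair $(v_j,v_{j+1})$ must route through $u$ at cost $2H$, which I ensure is at least $2k+1$. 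If instead it is a spanning tree $T$, then deleting the chord splits $\{v_0,\ldots,v_{k+1}\}$ into $T_A\ni v_0$ and $T_B\ni v_{k+1}$, and because the indices are linearly ordered some consecutive pair $(v_j,v_{j+1})$ straddles this cut. The tree path between such a pair is forced to cross the chord, and by the triangle inequality along the line its length is at least $j+(k+1)+(k-j)=2k+1$. Either way the dilation is at least $2k+1$, so $t^*=2k+1$.

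Next I will analyze greedy. The shortest non-$G$ edges are the $k+1$ line edges $(v_i,v_{i+1})$ of length $1$. Every $G$-path between two $v$-vertices (other than the chord endpoints) must go through $u$, giving $d_G(v_i,v_{i+1})=2H$. Moreover, adding earlier line edges does not shorten this distance, because any route that uses a previously added line edge still needs to traverse the star (or the chord-plus-star), both of which are longer than $2H$. Hence every one of the $k+1$ line edges violates the greedy condition whenever $t<2H$, and greedy adds all of them.

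Finally I will fix $H=(k+1)(2k+1)/2$ so that $2H=(k+1)t^*$; this also satisfies the technical inequality $2H\ge 2k+1$ used in the previous step. For any $\varepsilon>0$ and any $t\le(1-\varepsilon)(k+1)t^*$ we have $t<2H$, and the greedy analysis concludes that $k+1$ edges are added. The main obstacle is the lower bound $t^*\ge 2k+1$, which requires the case split on the $v$-subgraph's connectivity together with the chord-straddling argument---that is what pins the ratio down to exactly $k+1$ and makes the construction tight.
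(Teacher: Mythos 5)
Your computation of $t^*$ for the star-plus-chord instance is essentially sound (minor slip: with $S=\{(v_{i-1},v_i)\}$ it is not true that every other pair has dilation $1$, e.g.\ $(v_{k-1},v_{k+1})$ has dilation $k$, but the maximum is indeed $2k+1$, so $t^*=2k+1$ holds). The fatal problem is in the greedy analysis, specifically the claim that ``adding earlier line edges does not shorten this distance, because any route that uses a previously added line edge still needs to traverse the star (or the chord-plus-star).'' This is false for the last unit edge processed. Greedy processes the $k+1$ unit edges first; the first $k$ of them are indeed added (distance $2H>t$). But when the final unit edge, say $(v_a,v_{a+1})$, is examined, the other $k$ unit edges are already present, and the route $v_a\to v_{a-1}\to\cdots\to v_0\to v_{k+1}\to v_k\to\cdots\to v_{a+1}$ uses only added line edges and the chord --- no star edges --- and has length $a+(k+1)+(k-a)=2k+1$. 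Hence $d_{G_k}(v_a,v_{a+1})=2k+1$, and for every $t\geq 2k+1$ the edge is \emph{not} added. Moreover, at that point the graph's maximum dilation is exactly $2k+1=t^*$ (the gap pair realises it, all longer candidate pairs are even better off), so no later edge is added either, and greedy terminates after only $k$ additions. In other words, your construction forces $k+1$ additions only when $t<t^*$, which is the trivial regime; for all $t$ in $[t^*,\,(1-\varepsilon)(k+1)t^*]$ --- exactly the range the theorem is about --- greedy adds $k$ edges and is in fact optimal on your instance. The construction therefore proves nothing about the approximation factor.

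The missing idea is that the $k$ cheap edges which greedy is forced to add must be \emph{useless} for some other pair whose dilation remains above $(1-\varepsilon)(k+1)t^*$ after they are added; in your instance the forced edges, together with the chord, fully repair the last pair. The paper's construction arranges precisely this decoupling: there are $k$ extremely short pairs $(c_i,f_i)$ of enormous dilation that greedy must add first, but these shortcuts barely shorten the path between a far-apart pair $(a_1,z_1)$, whose dilation after the $k$ forced additions is still about $(2k+2)/h\approx(k+1)t^*>(1-\varepsilon)(k+1)t^*\geq t$, so a $(k+1)$-st edge is forced; meanwhile the optimum spends its $k$ edges on the pairs $(b_i,g_i)$, achieving $t^*\approx 2/h$. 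To salvage your approach you would need to redesign the instance so that the detour completing the forced edges (your chord) does not simultaneously bring the leftover pair's dilation down to $t^*$, but keeps it near $(k+1)t^*$.
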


\begin{proof}
Fix $h$ to be a small positive constant that is much smaller than $\min(\frac 1 t,\frac 1 k)$, and fix a constant~$h'$ to be arbitrarily small relative to $h$. We construct the graph~$G$ shown in Figure~\ref{fig:greedy_lower_bound}.

Let the vertices of $G$ be
\[
    \begin{array}{rcll}
    a_1 &=& (0      , 2h           ) \\
    b_i &=& (1      , 2ih          ) &\, \forall \, 1 \leq i \leq k \\
    c_i &=& (2      , 2ih          ) &\, \forall \, 1 \leq i \leq k \\
    d_i &=& (k+3+i  , 2ih          ) &\, \forall \, 1 \leq i \leq k \\
    e_i &=& (k+3+i  , (2i+1)h      ) &\, \forall \, 1 \leq i \leq k \\
    f_i &=& (2      , (2i+1)h-h'   ) &\, \forall \, 1 \leq i \leq k \\
    g_i &=& (1      , (2i+1)h      ) &\, \forall \, 1 \leq i \leq k \\
    y_1 &=& (0      , (2k+1)h      ) \\
    z_1 &=& (0      , 3h             ) \\
    \end{array}
\]

The graph $G$ is a path between these vertices. The edges of $G$ are between consecutive elements in the sequence $a_1, b_1, c_1, d_1, e_1, f_1, g_1, b_2, c_2, \ldots, f_k, g_k, y_1, z_1$. See Figure~\ref{fig:greedy_lower_bound}.

\begin{figure}[tb]
    \centering
    \includegraphics{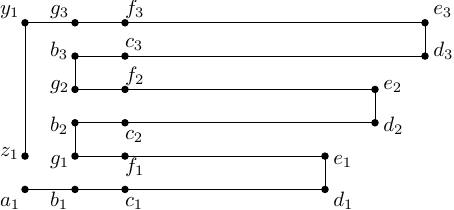}
    \caption{The construction for $k=3$.}
    \label{fig:greedy_lower_bound}
\end{figure}

The pairs of points with the largest dilation are $(a_1, z_1)$, $(b_i, g_i)$ and $(c_i, f_i)$. We can pick a small enough value of $h$ so that the dilation of all other pairs are relatively insignificant. The optimal $k$ edges to add are $(b_i, g_i)$ for all $1 \leq i \leq k$. After adding these $k$ edges, the pairs of points with the largest dilation are $(a_1, z_1)$ and $(c_i, f_i)$. Of these, the pair of points $(a_1,z_1)$ realises the maximum dilation, which is $t^* = (2 + (4k-3)h) / h \approx 2/h$, if $h$ is much smaller than $\frac 1 k$. 
 
Now let us run the greedy spanner construction for some $t \leq (1-\varepsilon)(k+1) \cdot t^*$. All pairs of points $(a_1, z_1)$, $(b_i, g_i)$ and $(f_i, c_i)$ start off with dilation greater than $2(k+2)/h$. But $2(k+2)/h = (k+2) \cdot 2/h > (k+1) \cdot t^* > t$, where the second inequality is true for sufficiently small values of~$h$. The pairs of points with highest dilation are $(a_1, z_1)$, $(b_i, g_i)$ and $(f_i, c_i)$, and the edges connecting these pairs of points satisfies $d_{G_i}(e) > t \cdot d_M(e)$. The shortest of these edges will be added first by the greedy $t$-spanner construction. The pairs $(c_i, f_i)$ have distance $h - h'$, making the edge between them the shortest and first to be considered by the greedy algorithm. Adding an edge between $(c_i, f_i)$ does not reduce the dilation of the other pairs of points $(c_j, f_j)$. Therefore, the greedy spanner construction first adds the edges $(c_i, f_i)$ for all $1 \leq i \leq k$. 

After adding $(c_i, f_i)$ for all $1 \leq i \leq k$, the dilation between the pair of points $a_1$ and $z_1$ is now $(2k+2 + (4k-3)h) / h$. But $(2k+2 + (4k-3)h) / h = (2k+2 + (4k-3)h) / (2 + (4k-3)h) \cdot t^* > (1-\varepsilon)(k+1) \cdot t^*$ for sufficiently small values of $h$ relative to $\varepsilon$. Therefore, the greedy $t$-spanner construction must add the edges $(c_i, f_i)$ for all $1 \leq i \leq k$ plus at least one additional edge, so it adds at least~$k+1$ edges in total.
\end{proof}

Our construction shows that in Theorem~\ref{thm:problem2} we cannot hope to obtain a bound that is much better than $t^* > \frac t {k+1}$. Similarly, in Theorem~\ref{thm:problem1}, our construction implies that the algorithm may continue searching for higher dilation values up until $(1-\varepsilon)(k+1) \cdot t^*$. Therefore, we cannot hope to obtain a much better approximation ratio than~$(1+\varepsilon)(k+1)$ with our algorithm.

\section{Farshi~\etal's Conjecture}
\label{sec:bottleneck}

Farshi~\etal~\cite{DBLP:journals/siamcomp/FarshiGG08} conjectured that generalising their algorithm to general $k$ may provide a reasonable approximation algorithm. We show an~$\Omega(2^{k})$ lower bound for their algorithm. 

Farshi~\etal~\cite{DBLP:journals/siamcomp/FarshiGG08} studied the special case where $k=1$. They achieved a 3-approximation by adding the \emph{bottleneck edge}, which is an edge between a pair of points that achieves the maximum dilation. They also provided a generalisation of their algorithm for $k > 1$. The generalisation consists of~$k$ stages. In each stage, the dilation of the graph is computed, and a pair of points that achieves the maximum dilation is identified. Then an edge is added between those pair of points. Formally, given an initial metric graph $G$, and an integer $k$:

\begin{definition}
\label{defn:bottleneck}
Let $G_0 = G$, and for $1 \leq i \leq k$, let $G_i = G_{i-1} \cup \{b_i\}$ where $b_i$ is an edge between the pair of points that achieves the maximum dilation of $G_{i-1}$.
\end{definition}

Farshi~\etal~\cite{DBLP:journals/siamcomp/FarshiGG08} conjectured that the dilation of the augmented graph~$G_k$ may be reasonable approximation for the dilation of the optimal graph~$G^*$. We provide a negative result that states that their algorithm cannot yield an approximation factor better than $2^k$.

\begin{theorem}
For any $k \geq 1$, there exists a initial graph $G$  where bottleneck algorithm in Definition~\ref{defn:bottleneck} yields a graph~$G_{k}$ with dilation $2^k$ times that of the dilation of the optimal graph~$G^*$.
\end{theorem}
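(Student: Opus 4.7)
My plan is to exhibit, for each $k \geq 1$, an explicit embedded graph $G$ on which the bottleneck algorithm of Definition~\ref{defn:bottleneck} performs exponentially worse than the optimum. The construction will be recursive in $k$, with a new gadget at each level that doubles the gap between greedy and optimum.

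The construction proceeds as follows. Given $G^{(k-1)}$, I build $G^{(k)}$ by embedding a scaled copy of $G^{(k-1)}$ inside a much larger gadget containing a new pair of vertices $(p_k,q_k)$ whose dilation in $G^{(k)}$ is designed to dominate every pair in the embedded copy, and to be exactly twice the initial maximum dilation of $G^{(k-1)}$. The key geometric invariant to maintain is \emph{scale separation}: the gadget around $(p_k,q_k)$ is placed far enough away that no shortest path between vertices inside the embedded copy is ever shortened by the edge $p_k q_k$, and conversely no internal edge of the embedded copy lies on a useful shortcut for $(p_k,q_k)$. Thus the two parts of $G^{(k)}$ interact only trivially.

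Step 1 (greedy trace). By induction on $k$ I would show that the bottleneck algorithm first picks the edge $p_k q_k$, since $(p_k,q_k)$ is the unique maximum-dilation pair. After adding this edge, the dilation of $(p_k,q_k)$ collapses to $1$, but by the scale-separation invariant nothing changes inside the embedded $G^{(k-1)}$; the algorithm then recurses, spending its remaining $k-1$ edges exactly as it would on $G^{(k-1)}$. Hence $\operatorname{dil}(G_k)$ equals the greedy dilation of $G^{(k-1)}$ after $k-1$ steps, which by induction is still far from the optimum of $G^{(k-1)}$. Step 2 (optimum). I would exhibit an explicit set $S^*$ of $k$ edges, one per level of the recursion, where the edge at level $i$ is chosen not to be $p_i q_i$ but rather an interior ``hub'' edge that simultaneously halves the dilation of pair $(p_i,q_i)$ \emph{and} of every pair in the embedded copy of $G^{(i-1)}$. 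Because the effect compounds multiplicatively across the $k$ levels, the resulting dilation is $2^{-k}$ times the initial maximum. Step 3 (ratio). Combining gives $\operatorname{dil}(G_k)/\operatorname{dil}(G^*) = 2^k$.

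The main obstacle is getting the geometric layout exactly right so that both directions of the scale-separation invariant hold: the bottleneck edge $p_kq_k$ must be genuinely wasted with respect to the embedded subproblem (no collateral benefit for greedy), while the optimal hub edge at the same level must be useful for both the current pair and every pair at finer scales (compounding benefit for optimum). I expect to handle this by placing each level's gadget on an arc of a distinct, much larger radius than the next, and by padding the ``transit'' edges between consecutive levels to dominate any would-be shortcut, so that every required distance comparison reduces to a transparent triangle-inequality calculation.
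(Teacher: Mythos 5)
Your Step 1 (the greedy trace) is plausible, but the optimum side of your induction does not close, and this is a genuine gap rather than a missing detail. Unroll your recursion: since the bottleneck algorithm spends one edge per level on $(p_i,q_i)$ and, by your scale-separation invariant, never affects the innermost copy, its final dilation is the base value $M_0$ (the initial dilation of the innermost graph $G^{(0)}$). On the other side you claim the optimum reaches ``$2^{-k}$ times the initial maximum''; since you set the initial maximum of $G^{(k)}$ to $2^kM_0$, that is again $M_0$, i.e.\ a ratio of $1$, not $2^k$. The deeper problem is the mechanism itself: your level-$i$ hub edge is said to help only $(p_i,q_i)$ and the finer scales, so the top pair $(p_k,q_k)$ is halved exactly once by your $S^*$ and retains dilation about $2^{k-1}M_0$, which is \emph{worse} than greedy's $M_0$ for $k\ge 2$ (and gives ratio $1$ even at $k=1$), so the exhibited $S^*$ does not certify a small optimum. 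More generally, you have built $k+1$ essentially independent scales (the base plus $k$ levels) while the optimum has only $k$ edges; some pair must go without a dedicated shortcut, and nothing in your plan explains why that leftover pair ends up with small dilation. Making the initial dilations strictly double per level, with a unique maximum at each step, is exactly what creates this dead end.

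The paper resolves the tension with a different mechanism, which shows what a working invariant looks like. All critical pairs --- $(x_0,x_1)$ at vertical distance $2^{k+1}h$ and $(y_i,y_{i+1})$ at vertical distances $2^ih$ --- have the \emph{same} initial dilation $\sqrt{1+h^2}/h$, because the graph is a single path whose edges all have slope $\pm h$; the bottleneck algorithm is merely tie-broken (adversarially, or by an arbitrarily small perturbation) into spending its budget on $(x_0,x_1)$ and the largest $y$-pairs, leaving $(y_1,y_2)$ at dilation about $1/h$. The optimum adds the $k$ edges $(y_i,y_{i+1})$; the only pair it leaves without a dedicated edge is $(x_0,x_1)$, but that pair can now route along the shortcut spine $y_1,\dots,y_{k+1}$, so its graph distance stays about $2$ while its Euclidean distance is $2^{k+1}h$, giving dilation about $1/(2^kh)$. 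The factor $2^k$ thus comes from the ratio of the Euclidean distances of the two leftover pairs (greedy's versus the optimum's), whose detours have the same absolute length; no single edge ever has to halve dilations at another scale, and no multiplicative compounding is needed. To salvage a recursive construction you would have to replace the halving invariant with one of this kind: equal initial dilations across levels, and one large pair whose shortest path reuses all of the optimum's shortcuts simultaneously.
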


\begin{proof}
Fix $h$ to be a small constant. Let the vertices of $G$ be 
\[
    \begin{array}{rclclll}
    x_0 &=& (-1, h) \\
    y_i &=& (0, 2^i h) &\forall \quad 1 \leq i \leq k+1\\
    z_i &=& (2^{i-1}, 3 \cdot 2^{i-1} h)&\forall \quad 1 \leq i \leq k\\
    x_1 &=& (-1, 2^{k+1}h + h)
    \end{array}
\]
Join the vertices together to form a path $x_0, y_1, z_1, y_2, z_2, \ldots, y_k, z_k, y_{k+1}, x_1$. See Figure~\ref{fig:2^k}.

\begin{figure}[tb]
    \centering
    \includegraphics{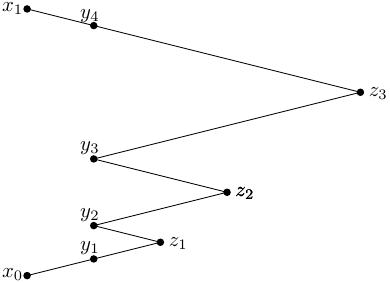}
    \caption{The construction for $k=3$.}
    \label{fig:2^k}
\end{figure}

It is straightforward to check that all edges in $G$ have gradient $\pm h$. Since $h$ is a small constant, all edges are almost horizontal. Therefore, the pairs of vertices with maximum dilation are those that are vertically above one another, in other words, the pairs $(x_0,x_1)$, or $(y_i,y_{i+1})$ for $1 \leq i \leq k$. In particular, all the pairs listed have a dilation value of $\sqrt{1+h^2}/h$.

Since $(x_0, x_1)$ is one of the pairs of vertices with maximum dilation, we can choose the first bottleneck edge $b_1$ to connect these two points. It is easy to check that since the distance in the graph between $(x_0,x_1)$ is at least twice the distance of any other pair $(y_i, y_{i+1})$, adding the first bottleneck edge does not reduce the dilation of any of the pairs $(y_i, y_{i+1})$. Inductively, we can show that for $i \geq 2$, $b_i = (y_{k-i+2}, y_{k-i+3})$ is the $i^{th}$ bottleneck edge added. This is because it initially had the maximum dilation of~$\sqrt{1+h^2}/h$, and adding the bottleneck edges $b_1, b_2, \ldots b_{i-1}$ did not reduce its dilation factor. Finally, after adding $b_1, \ldots b_k$, the dilation of the augmented graph $G_k$ is still $\sqrt{1+h^2}/h$ and is attained by $(y_1, y_2)$. 

The optimal placements of $k$ edges would be the edges $(y_1, y_2), \ldots (y_k, y_{k+1})$. Under this placement of $k$ edges, the maximum dilation value is attained by $(x_0, x_1)$, and is at least $2\sqrt{1+h^2} / (2^{k+1} \cdot h)= \sqrt{1+h^2}/(2^k \cdot h)$. Hence, the augmented graph $G_k$ has a dilation of $2^k$ times the dilation of the optimal graph $G^*$.

Note that in our construction, ties are broken adversarially when choosing the bottleneck edge to add. If we would like to lift the requirement on the adversarial choice of which bottleneck edge to add, we can perturb $x_0$ and $x_1$ vertically towards each other, which guarantees that $(x_0, x_1)$ is the first bottleneck edge to be added. We can do so similarly for the other bottleneck edges.
\end{proof}

\section{Concluding Remarks}
In Farshi~\etal~\cite{DBLP:journals/siamcomp/FarshiGG08} it was conjectured that generalising their algorithm to any positive integer $k$ may provide a reasonable approximation algorithm. In Section~\ref{sec:bottleneck}, we showed an $\Omega(2^k)$  lower bound for the approximation factor. We obtained the first positive result for the general case. Our approximation algorithm runs in $O(n^3 \log n)$ time and guarantees an $O(k)$-approximation factor. 

Two obvious open problems are to develop an algorithm with a better approximation factor, or to show an inapproximability bound.

\bibliographystyle{plain}
\bibliography{main}

\end{document}